\documentclass[runningheads]{llncs}
\usepackage[T1]{fontenc}
\usepackage{graphicx,xcolor}

\newcommand{\PBWT}{\ensuremath{\mathrm{PBWT}}}

\begin{document}

\title{Faster PBWT prefix-array access via batching}
\author{Travis Gagie\inst{1}\orcidID{0000-0003-3689-327X}}
\authorrunning{T. Gagie}
\institute{Faculty of Computer Science, Dalhousie University, Halifax, Canada \email{travis.gagie@gmail.com}}
\maketitle

\begin{abstract}
\noindent
The positional Burrows-Wheeler Transform (PBWT) is commonly used to store haplotype panels compactly in such a way that, given a query haplotype, we can quickly find the set maximal exact matches (SMEMs) between the query and the haplotypes in a panel.  There are generally two steps in this process: first we find the maximal substrings of the query that occur in the same positions in haplotypes in the panel and then, for each such substring, report the haplotypes in the panel in which the substring occurs in the same position as in the query.  Very recently, Bonizzoni, Gagie and Gao (2026) gave two time-space tradeoffs for the second step: they use either $O ((r + h) \log n)$ bits and $O (\log \log \min (h, \ell) + k)$ time to report $k$ haplotypes in the panel, or $O (r \log h + h \log n)$ bits and $O (k \log \log h)$ time, where $r$ is the number of runs in the panel's PBWT and $h$, $\ell$ and $n = h \ell$ are the panel's height, length and size, respectively.  We observe here that if we can batch queries until we have found $r \lg (h) / \lg r$ such substrings and we report an average of at least $\lg (r) / \lg h$ haplotypes in the panel per substring, for example, then for the second step we can easily use $O (r \log h)$ bits and constant time to report each haplotype.  Our approach is based on an algorithm for constructing the prefix arrays quickly from the PBWT, which may be of independent interest.

\keywords{Haplotype panels \and Set maximal exact matches (SMEMs) \and Positional Burrows-Wheeler Transform (PBWT) \and Prefix arrays.}
\end{abstract}

\section{Introduction}
\label{sec:introduction}

Human beings are diploid, meaning that our cells generally contain two copies of each chromosome in the genome, except that sperm and egg cells contain only one copy of each chromosome.  In contrast, cells of haploid organisms contain only one copy of each chromosome.  Each copy of a chromosome in a human cell can be expressed as the characters it has at the variation sites in that chromosome, and these characters together are called a {\em haplotype} (for ``haploid genotype'').  Because the human genome is so big, the probability of different mutations occurring at the same site and becoming widespread in the population is small, so haplotypes are generally expressed as binary strings and collections of haplotypes are expressed as binary matrices, called {\em haplotype panels}, with a row for each haplotype and a column for each variation site.  All the results in this paper hold, however, as long as there are only a constant number of variations at each site.

Geneticists are interested in storing haplotype panels such that, given a query haplotype, they can quickly find the maximal substrings of the query that occur in the same positions in haplotypes in the panel and, for each such substring, report the haplotypes in the panel in which the substring occurs in the same position as in the query.  Such a substring in the query and a matching substring in the same position in a haplotype in the panel is called a {\em set-maximal exact match} (SMEM), and finding SMEMs is useful for inferring how the person from whom the query haplotype came is likely related to the people from whom the haplotypes in the panel came, for example, or filling in missing values in the query haplotype; see~\cite{ICGV+26} for a list of applications.  Because the number of variation sites in the human genome is fairly large, storing haplotype panels compactly is also useful.

Inspired by the Burrows-Wheeler Transform~\cite{BW94} (BWT) and Ferragina and Manzini's~\cite{FM05} FM-index based on it, Durbin~\cite{Dur14} proposed the {\em positional BWT} (PBWT) on haplotype panels.  To build the PBWT of a binary matrix $M [0..h - 1][0..\ell - 1]$ we stably sort the bits in each column $M [0..h - 1][j]$ into the co-lexicographic order of the prefixes $M [0..h - 1][0..j - 1]$ of the rows preceding those bits.  Haplotypes are more likely to agree at a variation site if they have agreed at the preceding variation sites, so the columns of PBWTs usually consist of relatively few unary runs.  The {\em prefix array} $a_j$ for column $j$ of the PBWT is the permutation of $\{0, \ldots, h - 1\}$ such that $a_j [i]$ indicates the row from which $\PBWT [i][j]$ came.  Conceptually, we append a dummy bit to each haplotype to define an extra prefix array $a_\ell$.  Figure~\ref{fig:example1} shows a toy haplotype panel (from~\cite{ICGV+26}), its PBWT and the matrix whose columns are the prefix arrays.

\begin{figure}[t]
\begin{center}
\begin{tabular}{r|ccc}
  & haplotype    &              & prefix        \\
  & panel:       & PBWT:        & arrays:       \\[.5ex]
  & \tt 01234567 & \tt 01234567 & \tt 012345678 \\
\hline
0 & \tt 00110100 & \tt 00111101 & \tt 000416440 \\
1 & \tt 01001101 & \tt 01000110 & \tt 114163202 \\
2 & \tt 01011010 & \tt 01111111 & \tt 225250717 \\
3 & \tt 01100110 & \tt 01000111 & \tt 331631653 \\
4 & \tt 10011001 & \tt 10011110 & \tt 442745324 \\
5 & \tt 10101101 & \tt 10111000 & \tt 553024071 \\
6 & \tt 11000111 & \tt 11001001 & \tt 666572165 \\
7 & \tt 11011010 & \tt 11000000 & \tt 777307536 
\end{tabular}
\caption{A toy haplotype panel {\bf (left)} with $h = \ell = 8$, its PBWT {\bf (center)} and the matrix {\bf (right)} whose columns are the prefix arrays.}
\label{fig:example1}
\end{center}
\end{figure}

We are typically interested in haplotype panels whose height $h$ is hundreds to tens of thousands and whose length $\ell$ is millions to tens of millions --- so whose size $n = h \ell$ is hundreds of millions to hundreds of billions --- and whose total number $r$ of runs in the columns is tens of millions to billions.  Put another way, in practice we can generally expect the inequalities $n / r < h \ll \ell < r \ll n$ to hold.  If we concatenate the columns of the PBWT and store a sparse $n$-bit bitvector with 1s marking the beginnings of the runs and the same for the PBWT of the matrix with the rows reversed, then we use $O (r \log (n / r) + r' \log (n / r') + \ell \log n)$ bits, where $r'$ is the number of runs in the columns of the PBWT of the reversed matrix.  By using a version of Li's~\cite{Li12} forward-backward algorithm, for example, we can then find the substrings of a query haplotype corresponding to SMEMs in time proportional to the total length of those substrings times $O (\log \log n)$; see~\cite{ICGV+26} for more details.  Moreover, for each such substring in the query we find
\begin{itemize}
\item the interval in the prefix array for the PBWT column immediately after the occurrences of matching substrings in the panel, that contains the IDs of the haplotypes containing those matching substrings,
\item the last entry in that interval of the prefix array.
\end{itemize}
Recently, Bonizzoni, Cozzi and Gao~\cite{BCG26} (see also~\cite[Section 4]{BGR21}) showed how to use $O ((r + r') \log h + h \log n)$ bits and find those substrings in the query, their intervals in the prefix arrays and those intervals' last entries, all in time proportional to the substrings' total length, removing the $O (\log \log n)$ factor in the time bound.  It is useful to find the intervals' last entries because we can adapt the $\phi$ function from the BWT to the PBWT and use it to start from the last entries and enumerate all the entries.

Figure~\ref{fig:example2} shows the query haplotypes $Q_1 = \mathtt{10100111}$ and $Q_2 = \mathtt{11100010}$ and the highlighted matches in the panel to their substrings $Q_1 [0..3], Q_1 [2..6], Q_1 [3..7]$ and $Q_2 [0..1], Q_2 [1..4],$ $Q_2 [5..7]$ corresponding to SMEMs.  The corresponding bits in the PBWT and the entries in the prefix arrays are also highlighted.  Notice the highlighting for each match continues one prefix array to the right after the match ends, because we can only be sure to find an interval in that prefix array containing the IDs of the haplotypes we should report (that is, $a_4 [2], a_7 [7], a_8 [2]$ for $Q_1$ and $a_2 [6..7], a_5 [1], a_8 [1..2]$ for $Q_2$).  For these examples there are such intervals in the earlier prefix arrays as well, but that is just coincidence.

\begin{figure}[t]
\begin{center}
\begin{tabular}{c@{\hspace{5ex}}c}
\begin{tabular}{r|ccc}
  & haplotype    &              & prefix        \\
  & panel:       & PBWT:        & arrays:       \\[.5ex]
  & \tt 01234567 & \tt 01234567 & \tt 012345678 \\
\hline
0 & \tt 00110100 & \tt 00111101 & \tt 000416440 \\
1 & \tt 01001101 & \tt 01000\textcolor{blue}{\bf 1}10 & \tt 11416\textcolor{blue}{\bf 3}202 \\
2 & \tt 01011010 & \tt 01\textcolor{red}{\bf 1}111\textcolor{green}{\bf 1}1 & \tt 22\textcolor{red}{\bf 5}2\textcolor{red}{\bf 5}0\textcolor{green}{\bf 7}1\textcolor{green}{\bf 7} \\
3 & \tt 01\textcolor{blue}{\bf 10011}0 & \tt 0100\textcolor{blue}{\bf 0}111 & \tt 3316\textcolor{blue}{\bf 3}1653 \\
4 & \tt 10011001 & \tt 100\textcolor{green}{\bf 1}11\textcolor{blue}{\bf 1}0 & \tt 442\textcolor{green}{\bf 7}45\textcolor{blue}{\bf 3}24 \\
5 & \tt \textcolor{red}{\bf 1010}1101 & \tt \textcolor{red}{\bf 10}\textcolor{blue}{\bf 1}1100\textcolor{green}{\bf 0} & \tt \textcolor{red}{\bf 55}\textcolor{blue}{\bf 3}0240\textcolor{green}{\bf 7}1 \\
6 & \tt 11000111 & \tt 110\textcolor{red}{\bf 0}\textcolor{green}{\bf 1}001 & \tt 666\textcolor{red}{\bf 5}\textcolor{green}{\bf 7}2165 \\
7 & \tt 110\textcolor{green}{\bf 11010} & \tt 110\textcolor{blue}{\bf 0}0\textcolor{green}{\bf 0}00 & \tt 777\textcolor{blue}{\bf 3}0\textcolor{green}{\bf 7}5\textcolor{blue}{\bf 3}6 \\
\hline
$Q_1$ & \tt 10100111
\end{tabular}
&
\begin{tabular}{r|ccc}
  & haplotype    &              & prefix        \\
  & panel:       & PBWT:        & arrays:       \\[.5ex]
  & \tt 01234567 & \tt 01234567 & \tt 012345678 \\
\hline
0 & \tt 00110100 & \tt 00111101 & \tt 000416440 \\
1 & \tt 01001101 & \tt 010001\textcolor{orange}{\bf 1}0 & \tt 11416\textcolor{green}{\bf 3}\textcolor{orange}{\bf 2}0\textcolor{orange}{\bf 2} \\
2 & \tt 01011\textcolor{orange}{\bf 010} & \tt 011111\textcolor{pink}{\bf 1}1 & \tt 225250\textcolor{pink}{\bf 7}1\textcolor{pink}{\bf 7} \\
3 & \tt 0\textcolor{green}{\bf 1100}110 & \tt 0\textcolor{green}{\bf 1}00\textcolor{green}{\bf 0}111 & \tt 3\textcolor{green}{\bf 3}16\textcolor{green}{\bf 3}1653 \\
4 & \tt 10011001 & \tt 1001111\textcolor{orange}{\bf 0} & \tt 4427453\textcolor{orange}{\bf 2}4 \\
5 & \tt 10101101 & \tt 10\textcolor{green}{\bf 1}1100\textcolor{pink}{\bf 0} & \tt 55\textcolor{green}{\bf 3}0240\textcolor{pink}{\bf 7}1 \\
6 & \tt \textcolor{red}{\bf 11}000111 & \tt \textcolor{red}{\bf 11}001\textcolor{orange}{\bf 0}01 & \tt \textcolor{red}{\bf 666}57\textcolor{orange}{\bf 2}165 \\
7 & \tt \textcolor{blue}{\bf 11}011\textcolor{pink}{\bf 010} & \tt \textcolor{blue}{\bf 11}0\textcolor{green}{\bf 0}0\textcolor{pink}{\bf 0}00 & \tt \textcolor{blue}{\bf 777}\textcolor{green}{\bf 3}0\textcolor{pink}{\bf 7}536 \\
\hline
$Q_2$ & \tt 11100010
\end{tabular}
\end{tabular}
\caption{The substrings in the panel {\bf (highlighted)} for the SMEMs of $Q_1 = \mathtt{10100111}$ and $Q_2 = \mathtt{11100010}$, in the haplotype panel {\bf (left)} and its PBWT {\bf (center)}, and the corresponding entries in the prefix arrays.}
\label{fig:example2}
\end{center}
\end{figure}

Even more recently, Bonizzoni, Gagie and Gao~\cite{BGG26} showed how to implement the $\phi$ function such that we store $O ((r + h) \log n)$ bits and report the contents of each of those intervals in the prefix arrays in $O (\log \log \min (h, \ell) + k)$ time, where $k$ is the length of the interval, or store $O (r \log h + h \log n)$ bits and report the contents of each interval in $O (k \log \log h)$ time.  This raises an obvious question: when can we store $O (r \log h + h \log n)$ bits and report the contents of each interval in $O (\log \log \min (h, \ell) + k)$ time or, even better, in $O (k)$ time?  We observe in Section~\ref{sec:batching} that if we can batch queries until we have found $r \lg (h) / \lg r$ substrings in the queries corresponding to SMEMs and we report an average of at least $\lg (r) / \lg h$ haplotypes in the panel per substring, for example, then we can easily use $O (r \log h)$ bits and constant time to report each entry in each interval.    Our approach is based on an algorithm for constructing the prefix arrays quickly from the PBWT, which may be of independent interest.  For the sake of completeness, in the appendix we give a similar result for the divergence arrays.

\section{Batching and reporting}
\label{sec:batching}

Suppose we have the PBWT for a panel with the columns run-length compressed and, for each run in a column, the last entry in the corresponding interval in that column's prefix array.  This takes a total of $O (r \log h)$ bits.  Moreover, suppose that for $q$ query haplotypes we have already found the substrings corresponding to SMEMs with respect to the panel.  Finally, suppose that for each such substring, we have already found the interval in the appropriate prefix array (for the column right after where the substring ends) that contains the IDs of the haplotypes containing the matching substrings in the panel.  It is straightforward to modify Bonizzoni, Cozzi and Gao's algorithm to compute all this information while keeping its running time proportional to the total length of all the substrings.

Storing all this information na\"ively for each substring of a query haplotype takes $O (\log q + \log \ell + \log h)$ bits, since we record as a quintuple which query the substring belongs to, where it starts and ends, and where its interval starts and ends in the appropriate prefix array.  Since substrings for the same query cannot nest, however, we need not store where they start, reducing the quintuples to quadruples.  Furthermore, if we keep the quadruples sorted by the substrings' ending positions, then they take only $O (\log q + \log h)$ bits each, plus $O (q + \ell)$ bits overhead for all the substrings together.  For our example from Figure~\ref{fig:example2}, the sorted quadruples are
\[(2, 1, 6, 7),\ (1, 3, 2, 2),\ (2, 4, 1, 1),\ (1, 6, 1, 1),\ (1, 7, 2, 2),\ (2, 7, 1, 2)\,.\]

Our idea is to wait until we have found $r \lg (h) / \lg r$ such substrings.  Since we can assume each query contains at least one such substring, we have $r \lg (h) / \lg r \geq q$, so we use
\[O \left( \frac{r \log h}{\log r} \cdot (\log q + \log h) + q + \ell \right)
\subseteq O (r \log h)\]
bits to store all the information about the substrings.  If the total number $K$ of haplotype IDs we will report --- counting repetitions for different queries --- is $\Omega (r)$, then we can justify building all the prefix arrays according to the lemma below and then extracting the haplotype IDs to report them.  This is the case when, for example, we report an average of at least $\lg (r) / \lg h$ haplotypes in the panel per substring of the query haplotypes.

\begin{lemma}
\label{lem:building_PAs}
Suppose we already have the PBWT with run-length compressed columns and, for each run in a column, the last entry in the corresponding interval in that column’s prefix array.  Then we can build all the prefix arrays in $O (r)$ time.
\end{lemma}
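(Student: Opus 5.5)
The plan is to exploit that each prefix array is a stable partition of the previous one, and to sweep the columns left to right while keeping a single prefix array in a linked list. Since the total size of the prefix arrays is $n \gg r$, ``build'' cannot mean writing all $n$ entries explicitly within the bound. I will therefore read it as producing $a_0, a_1, \ldots, a_\ell$ in turn, each available for extraction while we are at column $j$, with total maintenance cost $O(r)$ beyond the initialization discussed below. This is exactly what the batching needs, since the quadruples are sorted by ending position and can be served during the sweep.

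The basic fact is standard. $a_{j+1}$ is obtained from $a_j$ by listing, in order, the IDs $a_j[i]$ with $\PBWT[i][j] = 0$, followed by those with $\PBWT[i][j] = 1$. This holds because the co-lexicographic order of prefixes of length $j+1$ sorts first by bit $j$, then stably by the previous order. In terms of runs, if column $j$ has runs $R_1, \ldots, R_{r_j}$, each corresponding to a contiguous block of $a_j$, then $a_{j+1}$ is the concatenation of the blocks of $0$-runs in order, followed by the blocks of $1$-runs in order. So each block moves as a unit, and $a_{j+1}$ is determined by a permutation of at most $r_j$ blocks.

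I will store the current $a_j$ as a doubly linked list on nodes indexed by haplotype ID, in an array of $h$ nodes, so ID $x$ gives its node in $O(1)$. For column $j$, the given last entries of the runs give, in $O(1)$ each, the node ending each block. The block of run $R_t$ starts right after the end of $R_{t-1}$, or at the head of the list for $t = 1$. Thus I can cut the list into its $r_j$ blocks with $O(r_j)$ pointer operations, then splice together the $0$-blocks followed by the $1$-blocks, again in $O(r_j)$ time. Summing over columns gives $O(\sum_j r_j) = O(r)$.

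Reporting an interval $a_{j+1}[s..e]$ for a quadruple needs its first node. I would obtain this from the run structure of column $j$: locate the run containing $s$ and take the offset within its block, or equivalently start from the last entry and walk the list in time proportional to the interval length. Either way, extraction costs $O(1)$ per reported ID.

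The step I expect to be the main obstacle is the initial array $a_0$ together with the bound. $a_0$ is the identity, and building its list costs $\Theta(h)$. This is within $O(r)$ only if $h = O(r)$. That can be argued since every run's last entry is stored and every haplotype ID appears, and in the regime $n/r < h \ll \ell < r$ of interest $h < r$ anyway. I would state this assumption explicitly, or absorb the $O(h)$ term into the $O(r)$ bound.
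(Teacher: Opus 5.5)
Your proposal is correct and is essentially the paper's proof. Both keep $a_j$ in a doubly-linked list with an ID-indexed node array, cut it after the stored last entry of each run, and re-splice the pieces stably (0-runs before 1-runs) in $O(r_j)$ time per column, generating the prefix arrays one after another. One correction: your claim that $h = O(r)$ follows because every run's last entry is stored and every ID appears is not valid, since if all rows are identical then $r = \ell$ while $h$ is arbitrary; the paper's proof also silently ignores the $\Theta(h)$ initialization of $a_0$, so your fallback of assuming $h = O(r)$ (or lazily initializing nodes with implicit neighbours $i \pm 1$) is the right fix.
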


\begin{proof}
We start building the prefix arrays by building a doubly-linked list containing the numbers from 0 to $h - 1$ --- that is, the contents of prefix array $a_0$ --- and a static array $A [0..h - 1]$ of pointers in which $A [i]$ points to the node of the doubly-linked list storing $i$.  We maintain the invariant that, after we have processed $j$ columns of the PBWT, the doubly-linked list contains $a_j$ and we know $a_j [0]$.

For each column $\PBWT [0..h - 1][j]$ of the PBWT in turn and each value $i < h - 1$ such that $\PBWT [i][j] \neq \PBWT [i + 1][j]$, we follow the pointer to the node of the doubly-linked list storing $i$, follow its pointer to its successor node in the doubly-linked list, and then cut the list between them.  When we are finished, we have cut the doubly-linked list into pieces corresponding to the runs in $\PBWT [0..h - 1][j]$ and we have pointers to the nodes on either side of each cut.  We stably sort those pieces according to the bits in the corresponding runs and then rejoin them, noting which value appears first in the first piece.  This takes time proportional to the number of runs in $\PBWT [0..h - 1][j]$ and, when we are done, the doubly-linked list contains  $a_{j + 1}$ and we know $a_{j + 1} [0]$.  Processing all the columns of the PBWT and generating the prefix arrays one after the other takes $O (r)$ time and $O (h)$ space on top of the PBWT.  In our example, when we start processing column 3 of the PBWT, the doubly-linked list contains $a_3 = [4, 1, 2, 6, 7, 0, 5, 3]$.  The runs in $\PBWT [0..7][3]$ are
\[\PBWT [0], \PBWT [1], \PBWT [2], \PBWT [3], \PBWT [4..5], \PBWT [6..7]\,,\]
so we cut the doubly-linked list into pieces $(4), (1), (2), (6), (7, 0), (5, 3)$ and reassemble them into $a_4 = [1, 6, 5, 3, 4, 2, 7, 0]$.
\end{proof}

\noindent
Lemma~\ref{lem:building_PAs} can be viewed as a version of Durbin's~\cite{Dur14} Algorithm 1 (``BuildPrefixArray'') that takes advantage of already having the PBWT.

Since we keep the substrings' quadruples sorted by the columns where the substrings end, when we build the prefix array for the column immediately after the one where one of the substrings ends, we can report the haplotype IDs in the interval in that prefix array in time proportional to the interval's length.  It follows that we can report all $K$ haplotype IDs for all the SMEMs in $O (K)$ time, in addition to what we spend building the prefix arrays.  Combined with Bonizzoni, Cozzi and Gao's result, this gives us the following theorem.

\begin{theorem}
\label{thm:batching}
If we are allowed to batch queries to a haplotype panel until we have found $r \lg (h) / \lg r$ substrings for SMEMs in the query haplotypes, and those substrings have $K \in \Omega (r)$ matches in the panel, then we can use $O ((r + r') \log h + h \log n)$ bits and time proportional to the total length of the substrings in the query haplotypes plus $K$.
\end{theorem}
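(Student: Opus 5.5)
The proof combines three ingredients, and the work is bookkeeping of space and time. \emph{First}, the space. We store the data structure of Bonizzoni, Cozzi and Gao, which takes $O((r + r') \log h + h \log n)$ bits. With the small modification described at the start of this section, it also outputs, for each substring corresponding to an SMEM, the quadruple consisting of the query ID, the ending column, and the interval in the prefix array of the next column. In addition we store the PBWT with run-length compressed columns, together with the last prefix-array entry of each run. Each of the $r$ runs costs $O(\log h)$ bits, so this adds $O(r \log h)$ bits.

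The batch of $r \lg(h)/\lg r$ quadruples is kept sorted by ending column. Since $q \le r\lg(h)/\lg r$ and $\log q + \log h = O(\log r)$, the quadruples occupy $O(r \log h)$ bits. This uses $q \le r$, and $\ell < r$ for the $O(q+\ell)$ overhead. If $\ell < r$ does not hold, the overhead of $O(\ell) \subseteq O(h\log n)$ bits is still absorbed, since $n = h\ell$. The working space of Lemma~\ref{lem:building_PAs} consists of the doubly-linked list and the pointer array $A$, which is $O(h \log h) \subseteq O(h\log n)$ bits. So the total space is as claimed.

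\emph{Second}, the time for the first step. Running the modified algorithm of Bonizzoni, Cozzi and Gao on all queries in the batch takes time proportional to the total length of the substrings, and this produces all quadruples. Sorting them by ending column is the one place needing care. I would use a radix/bucket sort on the ending column, which costs $O(\ell + \#\text{substrings})$. Alternatively, the quadruples come out naturally ordered per query, and these per-query lists can be merged by bucketing on column. The term $O(\ell)$ is covered because $\ell < r \le O(K)$ under the stated assumptions, or the buckets can simply be charged to the sweep over the columns below.

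\emph{Third}, reporting. We apply Lemma~\ref{lem:building_PAs}, but instead of storing all prefix arrays we sweep the columns once, maintaining only the current $a_j$ as a linked list. Whenever the sweep reaches column $j$, we process the quadruples whose interval lies in $a_j$. To do this we locate the interval's endpoint in $a_j$ in $O(1)$ time, for example by keeping an array position$\to$node refreshed during the sweep, or simply by walking the list from $a_j[0]$ while outputting. Then we walk $k$ nodes and output the IDs. The sweep costs $O(r)$ by the lemma, and the outputs cost $O(K)$ in total.

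The main obstacle is making random access into the list at the interval's start cost $O(1)$ rather than $O(h)$. My first attempt would be to note that the lemma's reassembly step touches only run boundaries, so a full rank array cannot be maintained cheaply. I would instead report each interval by starting from its \emph{last} entry, which is known from the Bonizzoni, Cozzi and Gao output: use $A$ to jump to that node, then walk $k-1$ predecessors. This gives $O(k)$ per interval. Since $r \in O(K)$, the total time is proportional to the substrings' total length plus $K$.
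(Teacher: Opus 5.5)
Your proposal is correct and follows the paper's proof. It keeps the Bonizzoni--Cozzi--Gao structure plus the run-length compressed PBWT with per-run samples, sorts the batched quadruples by ending column, and sweeps Lemma~\ref{lem:building_PAs} to generate the prefix arrays one after another in $O(r)$ total time. It then reports each interval as its prefix array appears and absorbs $O(r)$ into $O(K)$.

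Your final device is a genuine addition. You jump through $A$ to the interval's known last entry and walk $k-1$ predecessors, which requires storing that entry with each quadruple (still $O(\log h)$ bits apiece). This makes explicit how each report costs $O(k)$, which the paper only asserts.

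Two small corrections. First, discard the earlier position$\to$node and walk-from-$a_j[0]$ suggestions, since neither achieves $O(k)$ per report. Second, $\ell \le r$ always holds because every column has at least one run, so your fallback claim $\ell \in O(h \log n)$, which is false in general, is never needed.
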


\noindent
Some researchers~\cite{BG26} are now adapting results on suffixient sets~\cite{DGLMP24}, suffixient arrays~\cite{CDGK+24} and suffix-tree path decompositions~\cite{BCGK+26} to haplotype panels.  Instead of producing the prefix-array intervals for SMEMs, this will likely produce the first and last values in those intervals.  We note that our technique for reporting the haplotype IDs in the intervals can easily be adapted to work with this new information: after we build a prefix array, for each interval of haplotype IDs we should report from it, we start with its first entry and that interval and advance through the doubly-linked list --- reporting the IDs as we go --- until we reach its last entry.

\begin{credits}

\subsubsection{\ackname} This research was funded by NSERC Discovery Grant RGPIN-07185-2020.  Many thanks to Paola Bonizzoni and Younan Gao for helpful discussions.

\subsubsection{\discintname} The author has no competing interests to declare that are relevant to the contents of this article.

\end{credits}


\appendix

\section{Building absolute divergence arrays}
\label{app:div_arrays}

For the sake of completeness, we now prove a slightly slower analogue of Lemma~\ref{lem:building_PAs} for encodings of the divergence arrays.  In the {\em divergence array} $d_j [0..h - 1]$ for column $j$ of the PBWT, $d_j [0] = 0$ and the value $d_j [i]$ for $i > 0$ is the length of the longest common suffix of $M [a_j [i]][0..j - 1]$ and $M [a_j [i - 1]][0..j - 1]$.  Just as the prefix arrays are similar to the suffix array for a string, the divergence arrays are similar to the longest common prefix (LCP) array for a string.  For convenience, we define the {\em absolute divergence array} $d'_j$ such that $d'_j [i] = (j - 1) - d_j [i]$.  That is, $d'_j [i]$ is the previous column in which $M [a_j [i]][0..j - 1]$ and $M [a_j [i - 1]][0..j - 1]$ differ, rather than the distance from column $j - 1$ to that column.

\begin{lemma}
\label{lem:building_DAs}
Suppose we already have the PBWT with run-length compressed columns and, for each run in a column, the last entry in the corresponding interval in that column’s prefix array.  Then we can build all the absolute divergence arrays in $O (r \log h)$ time.
\end{lemma}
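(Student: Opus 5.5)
The plan is to mimic the proof of Lemma~\ref{lem:building_PAs}, but to replace the doubly-linked list by a balanced search-tree representation of the current prefix array, so that we can also maintain the absolute divergence values. Concretely, we keep $a_j$ as the in-order sequence of a balanced binary tree (a splay tree, or a treap/AVL tree with split and join). The node holding haplotype $x = a_j[i]$ also stores $d'_j[i]$, the absolute divergence with respect to its predecessor in the sequence. Each node is augmented with the maximum $d'$ value in its subtree. As before, a static array $A[0..h-1]$ points from each haplotype ID to its node. Since all $n$ entries cannot be written out in $O(r \log h)$ time, the output is an encoding: for each column we list only the positions and values where $d'_{j+1}$ differs from what the rearrangement of $a_j$ into $a_{j+1}$ carries over. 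I expect there to be $O(1)$ such positions per run, so the whole encoding has size $O(r)$.

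The key structural observation is the standard Durbin recurrence, phrased with absolute values. Suppose $a_j[p]$ and $a_j[q]$, with $p<q$, have the same bit in column $j$, so that they end up adjacent in $a_{j+1}$. Then $d'_{j+1}$ at the later of them equals $\max\{d'_j[p+1],\ldots,d'_j[q]\}$; it is not changed by column $j$, because the two haplotypes agree there. Consequences by case:
\begin{itemize}
\item For two consecutive elements inside the same run of column $j$ this gives $p+1=q$, so the stored value is simply inherited and nothing needs to be touched.
\item For the first element of a run piece, its new predecessor is the last element of the previous run piece with the same bit. Its new value is then the range maximum over the positions from just after that element up to and including the first element of the current piece.
\item The first element of the first $1$-piece follows the $0$-block in $a_{j+1}$ and has value $j$, since the two haplotypes differ at column $j$.
\item The element now at position $0$ gets value $0$ by convention.
\end{itemize}
So only $O(\#\text{runs in column } j)$ values change per column.

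The step for column $j$ then proceeds in three stages:
\begin{enumerate}
\item Using $A$ and the run boundaries, we split the tree into the pieces corresponding to the runs. Each split costs $O(\log h)$.
\item Before rearranging, we read off each piece's subtree maximum from its root. The range maximum required for the first element of a run piece is then just the maximum of the aggregates of the intervening pieces (all of which have the other bit) together with the current piece's first stored value. Scanning the pieces left to right and keeping, for each bit, a running maximum since the last piece of that bit gives all of these in $O(1)$ per piece.
\item We update the first-element values (each an $O(\log h)$ root-path update of the aggregates), record them in the output encoding, and join the $0$-pieces followed by the $1$-pieces. Each join costs $O(\log h)$.
\end{enumerate}
Summed over all columns, this is $O(r \log h)$ time and $O(h)$ working space on top of the PBWT.

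The main obstacle I anticipate is in stage 2. One has to check carefully that the range maximum really decomposes into whole intervening pieces plus one boundary value, with no off-by-one in which endpoint's $d'_j$ is included. One must also check that the tree aggregates remain correct after splits and joins, since splitting does not alter the stored predecessor-relative values but does change which values are "first in piece." I would verify this first on the example from Figure~\ref{fig:example1}, for instance the column-3 transition to $a_4$ used in the proof of Lemma~\ref{lem:building_PAs}. The $\log h$ factor comes only from the split/join/aggregate operations, which is why this lemma is slower than Lemma~\ref{lem:building_PAs}.
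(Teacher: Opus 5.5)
Your proposal is correct and takes essentially the same route as the paper. You split the current sequence at the run boundaries and note that only the head of each piece needs a new value, namely $\max(d'_j[i'+1..i])$ where $i'$ ends the previous run with the same bit; you get that value from a balanced tree augmented with subtree maxima that supports split, join and point updates in $O(\log h)$ time, and rejoin, for $O(r \log h)$ in total. The differences are minor: you fold the linked list and the range-maximum structure into one tree, you read the maxima off piece aggregates instead of issuing range queries, and you state explicitly that the head of the first $1$-piece gets value $j$, a case the paper's proof glosses over.
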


\begin{proof}
Building $d'_{j + 1}$ from $d'_j$ is almost like building $a_{j + 1}$ from $a_j$ but more complicated at the beginnings of runs in the PBWT.  To see why, consider that if $\PBWT [i - 1][j] = \PBWT [i][j]$ then the last column where $M [a_j [i]][0..j]$ and $M [a_j [i - 1]][0..j]$ differ is the same as where $M [a_j [i]][0..j - 1]$ and $M [a_j [i - 1]][0..j - 1]$, so we can get most of $d'_{j + 1}$ correct just by splitting a joining a doubly-linked list containing $d'_j$, just as we did to get $a_{j + 1}$ from $a_j$ in the proof of Lemma~\ref{lem:building_PAs}.

If $\PBWT [i - 1][j] \neq \PBWT [i][j]$, however --- meaning $d'_j [i]$ is at the beginning of one of the pieces into which we split the doubly-linked list containing $d'_j$ --- then we must work harder.  Specifically, we let $i'$ be the last position before $i$ such that $\PBWT [i'][j] = \PBWT [i][j]$, so the ending position of the previous run of the same character.  (When the matrices are binary, $i'$ is the ending position not of the run before the one starting at $\PBWT [i]$ but of the run before that.)  If $i'$ does not exist, then we reset the value at the start of the doubly-linked list to 0.  Otherwise, we change that value to the previous column in which $M [a_j [i]][0..j - 1]$ and $M [a_j [i']][0..j - 1]$ differ.

By the definition of the PBWT, that previous column is $\max (d'_j [i' + 1..i])$ --- the rightmost column on which any of the row-prefixes in $M [a_j [i'..i]] [0..j - 1]$ differ --- which we can find with a range-minimum query on $d'_j$.  To support that query, we keep a dynamic range-minimum data structure allowing splits and joins corresponding to how we split and re-join the doubly-linked list.  If we implement this with an augmented AVL tree with $h$ nodes then updating single values, splitting and joining all take $O (\log h)$ time.  It follows that we can compute all the absolute divergence arrays one after the other in $O (r \log h)$ time.
\end{proof}

\noindent
Lemmas~\ref{lem:building_PAs} and~\ref{lem:building_DAs} together can be viewed as a version of Durbin's~\cite{Dur14} Algorithm 2 (``BuildPrefixAndDivergenceArrays'') that takes advantage of already having the PBWT.  We can also adapt Durbin's Algorithm 3 (``ReportLongMatches'') to find long corresponding matches between rows more quickly when we already have the PBWT.  This may be interesting when combined with Gagie's~\cite{Gag26} recent algorithm for merging PBWTs.

\end{document}